\newtheorem{theorem}{Theorem}
\newtheorem{lemma}[theorem]{Lemma}
\newenvironment{proof}[0]{\textit{Proof.~}}{\hfill \( \Box \) \vspace{2mm} }
\title{On the Integrality Ratio of the Subtour LP for Euclidean TSP}
\author{Stefan Hougardy\\Research Institute for Discrete Mathematics, University of Bonn,\\
            Lenn\'estr.~2, 53113 Bonn, Germany}
\begin{document}

\maketitle 

\begin{abstract}
A long standing conjecture says that the integrality ratio of the subtour LP for metric TSP is $4/3$. 
A well known family of graphic TSP instances achieves this lower bound asymptotically. 
For Euclidean TSP the best known lower bound on the integrality ratio was $8/7$. 
We improve this value by presenting 
a family of Euclidean TSP instances for which the integrality ratio of the subtour 
LP converges to 4/3. 
\end{abstract}

{\small\textbf{keywords:} traveling salesman problem; subtour LP; Held-Karp bound; integrality ratio; 
Euclidean TSP}

\section{Introduction}

Given $n$ cities with their pairwise distances the \emph{traveling salesman problem} (TSP) 
asks for a shortest tour that visits each city exactly once.
This problem is known to be NP-hard~\cite{Kar1972} and cannot be approximated
within any constant factor unless P=NP~\cite{SG1976}.
For \emph{metric TSP} instances, i.e., TSP instances where
the distances satisfy the triangle inequality, 
a $3/2$ approximation algorithm was presented by Christofides in 1976~\cite{Chr1976}.
In spite of many efforts no improvement on this approximation ratio for the general metric
TSP has been achieved so far. 
One approach to obtain a better approximation algorithm for the metric TSP is 
based on the \emph{subtour LP}. This LP is a relaxation of an integer program for the TSP
that was first used by Dantzig, Fulkerson, and Johnson in 1954~\cite{DFJ1954}.
If the cities are numbered from $1$ to $n$ and $c_{ij}$ denotes the distance between city $i$ and city $j$ then
the subtour LP can be formulated as follows:\\

\noindent\framebox{
\parbox{0.96\linewidth}{\vspace*{2mm}\hspace*{10mm}
$\displaystyle \text{minimize: } \sum_{1\le i<j\le n} c_{ij}\cdot x_{ij}$ 
\begin{eqnarray*}
\hspace*{-5mm}\text{subject to:}\\[3mm]
             \sum_{j\not = i} x_{ij} & = & 1 ~~\text{ for all } i \in \{1,\ldots,n\} \\
             \sum_{j\not = i} x_{ji} & = & 1 ~~\text{ for all } i \in \{1,\ldots,n\} \\
\sum_{i,j\in S} x_{ij}    & \le & |S| - 1 ~~\text{ for all }  \emptyset\not = S\subsetneq \{1,\ldots,n\}\\
0 & \le & x_{ij} ~\le~ 1\\[-6mm]
\end{eqnarray*}}}\\[6mm]

This LP has an exponential number of constraints but can be solved in 
polynomial time via the ellipsoid method as the separation problem can be solved efficiently~\cite{GLS1981}. 
The \emph{integrality ratio} of the subtour LP for the metric TSP is the supremum 
of the length of an optimum TSP tour over the optimum solution of the subtour LP.
Wolsey~\cite{Wol1980} has shown that the integrality ratio of the subtour LP for metric TSP is at most $3/2$.
A well known conjecture states that the integrality ratio of the subtour LP for metric TSP
is $4/3$. This conjecture seems to be mentioned for the first in 1990~\cite[page 35]{Wil1990} but according 
to~\cite{Goe2012} it was already well known several years before. 
A proof of this conjecture yields a polynomial time algorithm
that approximates the value of an optimum TSP tour within a factor of $4/3$. 
It is known that the integrality ratio of the subtour LP is at least $4/3$ as there exists
a family of metric TSP instances whose integrality ratio converges to $4/3$~(see for example~\cite{Wil1990}).

For the metric TSP the lower and upper bound of $4/3$ and $3/2$ on the integrality ratio 
of the subtour LP have not been improved for more than 25 years. Therefore, people became 
interested to study the integrality ratio of the subtour LP for special cases of the metric TSP.

The \emph{graphic TSP} is a special case of the metric TSP where the distances between the $n$ cities
are the lengths of shortest paths in an undirected connected graph on these $n$ cities. 
For graphic TSP the integrality ratio
of the subtour LP is at least $4/3$~\cite{Wil1990} and at most 1.4~\cite{SV2012}.

In the $1$-$2$-TSP the distances between two cities are either 1 or 2. Here the largest known lower bound on the 
integrality ratio of the subtour LP is $10/9$~\cite{Wil1990} while the smallest known upper bound is 
$5/4$~\cite{QSWZ2014}. 

In the \emph{Euclidean TSP} the cities are points in the plane and their distance is the Euclidean
distance between the two points. In this case the best known lower bound on the integrality ratio
seems to be $8/7$ (mentioned in~\cite{Wol1980}). The best known upper bound is as in the general metric case $3/2$.

In this paper we will improve the lower bound for the integrality ratio of the Euclidean TSP by presenting
a family of Euclidean TSP instances for which the integrality ratio of the subtour LP converges to $4/3$. 
We will prove this result in Section~\ref{sec:construction}. Using a more careful analysis we prove an
explicit formula for the integrality ratio in Section~\ref{sec:ExactFormula}.

\section{Euclidean instances with integrality ratio 4/3}\label{sec:construction}

We will now describe our construction of a family of Euclidean TSP instances 
for which the integrality ratio of the subtour LP converges to $4/3$. 
Each instance of the family contains equidistant points on three parallel lines. 
More precisely, it contains the $3 n$ points with coordinates 
$(i, j\cdot d)$ for $i=1, \ldots, n$ and $j=1,2,3$ where $d$ is the distance between the parallel lines.
We will denote these instances by $G(n,d)$.  Figure~\ref{fig:G(18,3)} shows the instance $G(18,3)$.

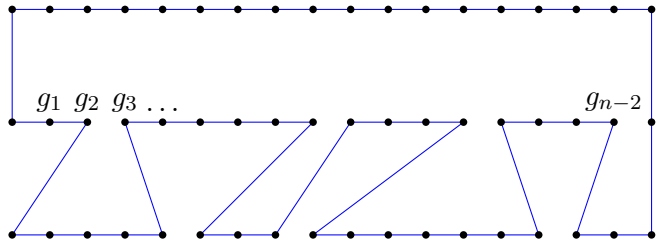
\begin{figure}[ht]
\centering
\begin{tikzpicture}[scale=0.5]

\foreach \x in {0,1,2,3,4,5,6,7,8,9,10,11,12,13,14,15,16}
   \draw[blue] (\x, 6) -- (\x+1, 6);

\draw[blue] ( 0, 6) -- ( 0, 3);
\draw[blue] (17, 6) -- (17, 3);

\foreach \x in {0,1,3,4,5,6,7,9,10,11,13,14,15}
   \draw[blue] (\x, 3) -- (\x+1, 3);

\foreach \x in {0,1,2,3,5,6,8,9,10,11,12,13,15,16}
   \draw[blue] (\x, 0) -- (\x+1, 0);
   
\draw[blue] ( 2, 3) -- ( 0, 0);
\draw[blue] ( 3, 3) -- ( 4, 0);
\draw[blue] ( 8, 3) -- ( 5, 0);
\draw[blue] ( 9, 3) -- ( 7, 0);
\draw[blue] (12, 3) -- ( 8, 0);
\draw[blue] (13, 3) -- (14, 0);
\draw[blue] (16, 3) -- (15, 0);
\draw[blue] (17, 3) -- (17, 0);

\foreach \x in {0,1,2,3,4,5,6,7,8,9,10,11,12,13,14,15,16,17}
	\foreach \y in {0,3,6}
		\fill (\x, \y) circle(1mm);

\foreach \x in {1,2,3}
  \draw (\x,3) node[anchor = south] {\small $g_{\x}$};
\draw (4,3) node[anchor = south] {\small $\ldots$};
\draw (16,3) node[anchor = south] {\small $g_{n-2}$};

\end{tikzpicture}
\caption{A TSP tour for the instance $G(18,3)$.}
\label{fig:G(18,3)}
\end{figure}

The instances $G(n,d)$ belong to the class of so called \textit{convex-hull-and-line TSP}.
These are Euclidean TSP instances where all points that do not lie on the boundary of the convex hull 
of the point set lie
on a single line segment inside the convex hull.     
Deineko, van Dal, and Rote~\cite{DDR1994} have shown that
an optimum TSP tour for a convex-hull-and-line TSP can be found in polynomial time. 

Following the notation in~\cite{DDR1994} 
we denote the point $(i+1, 2 \cdot d)$ in the instance $G(n,d)$ by $g_i$ for $i=1, \ldots, n-2$
(see Figure~\ref{fig:G(18,3)}).
Thus, the set ${\cal G} := \{g_1, \ldots, g_{n-2}\}$ contains all points of the instance $G(n,d)$ that do not lie
on the boundary of the convex hull of $G(n,d)$. Let $\cal B$ denote the set of all other points in 
$G(n,d)$.  An optimum TSP tour for $\cal B$ is obtained by visiting the points in $\cal B$
in their cyclic order~\cite{DDR1994}. Moreover, this tour is unique. Therefore, we can call two points in $\cal B$
\textit{adjacent} if they are adjacent in the optimum tour for $\cal B$. Let ${\cal B}_l$ denote all points
in $\cal B$ that lie on the two lower lines.  
The following structural result is a special case of Lemma~3 in~\cite{DDR1994}:

\begin{lemma}[Lemma~3 in \cite{DDR1994}]
An optimum TSP tour for the instance $G(n,d)$ can be obtained by splitting the set of points
$\cal G$ into $k+1$ segments 
$$\{g_1, g_2, \ldots, g_{i_1}\}, ~~~
\{g_{i_1+1}, \ldots, g_{i_2}\}, ~~~ \ldots, ~~~
\{g_{i_k+1}, \ldots, g_{n-2}\}$$
for $0\le k \le n-2$, $0=i_0 < i_1 < i_2 < \cdots < i_k < n-2$,
and inserting each segment between two adjacent points in ${\cal B}_l$. 
\end{lemma}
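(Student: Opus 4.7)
\emph{Proof plan.} My plan is to apply the general theorem on convex-hull-and-line TSP of Deineko, van Dal, and Rote, and then to use the top--bottom symmetry of $G(n,d)$ to push every segment insertion below the middle line. The first step is to verify that $G(n,d)$ really fits the convex-hull-and-line framework: its convex hull is the axis-aligned rectangle with corners $(1,d), (n,d), (n,3d), (1,3d)$, whose boundary contains exactly the points of $\mathcal{B}$, and all remaining points $g_1, \ldots, g_{n-2}$ lie on the single horizontal segment $y = 2d$. Lemma~3 of \cite{DDR1994} then produces an optimum tour that visits $\mathcal{B}$ in convex-hull order and visits $\mathcal{G}$ by inserting contiguous left-to-right segments $\{g_{i_j+1}, \ldots, g_{i_{j+1}}\}$, each between one pair of cyclically adjacent points of $\mathcal{B}$. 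This already yields the decomposition of $\mathcal{G}$ asserted in the lemma; what remains is to arrange that every insertion pair belongs to $\mathcal{B}_l$.

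For that remaining step I would exploit the reflection of $G(n,d)$ about the middle line $y=2d$. For any pair $(p,q)$ in the upper half of $\mathcal{B}$ that is cyclically adjacent, the mirror pair in $\mathcal{B}_l$ yields identical insertion cost for any fixed segment $\{g_a, \ldots, g_b\}$: the collinear contribution $|g_a g_{a+1}| + \cdots + |g_{b-1} g_b| = b - a$ is invariant, and the two boundary-to-$\mathcal{G}$ distances $|pg_a|$ and $|g_b q|$ depend only on horizontal offsets and on the vertical gap $d$, which is the same above and below. Hence each upper insertion can be relocated to its mirror slot at no cost.

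Finally I would check that these relocations can be performed simultaneously without destroying cyclic validity of the tour. A simple count shows that $\mathcal{B}_l$ contributes $n+1$ consecutive adjacency slots in the cyclic order of $\mathcal{B}$, whereas the DDR decomposition uses at most $|\mathcal{G}| = n-2$ of them, so the $j$-th segment can always be assigned to the bottom-row slot directly below its span and the left-to-right order of the segments is preserved automatically. The main obstacle throughout is therefore pure bookkeeping in the cyclic order; no geometric inequality beyond the obvious reflectional symmetry is needed.
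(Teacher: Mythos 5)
Your proposal is correct and follows essentially the same route as the paper: invoke Lemma~3 of \cite{DDR1994} for the convex-hull-and-line structure of $G(n,d)$ and then use the reflection symmetry about the middle line to move all segment insertions to adjacent pairs in ${\cal B}_l$. The paper's own proof is just a terser version of this (it states the symmetry observation in one sentence), so your extra bookkeeping about mirror slots and slot counts is simply a more detailed rendering of the same argument.
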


\begin{proof} 
This is exactly the statement of Lemma~3 in \cite{DDR1994} except that 
$\cal B$ is replaced by ${\cal B}_l$.
Thus, we only have to observe that because of symmetry of the instance $G(n,d)$ 
we may assume that all segments are inserted into adjacent points in ${\cal B}_l$.
\end{proof}

Let the \textit{cost} of inserting a segment $\{g_i, g_{i+1}, \ldots, g_j\}$ into $\cal B$ 
be the difference in the length of the optimum tour after and before inserting this segment.
From Lemma~4 in~\cite{DDR1994} it follows that a segment $\{g_i, g_{i+1}, \ldots, g_j\}$
that contains neither $g_1$ nor $g_{n-2}$ must be inserted between two adjacent points from the lower
line of points in $G(n,d)$. As the points have unit distance on the lines, the cost of inserting such a segment
only depends on the number of points contained in the segment.

\begin{lemma}\label{lemma:insertioncost}
For $d\ge 4$ the cost of inserting a segment of $i$ points from $\cal G$ into $\cal B$ is at least
$$ \begin{cases} 
    i-2 + \max\{2d, i-2\} & \text{if $g_1$ and $g_{n-2}$ do not belong to the segment}\\
    i-d + \max\{d, i\}    & \text{if $g_1$ or $g_{n-2}$ belong to the segment}
  \end{cases}$$
\end{lemma}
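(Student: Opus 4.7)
The plan is to reduce the cost of inserting the segment between two adjacent $\mathcal{B}_l$-points $X,Y$ to a sum of three concrete terms and then bound each by orthogonal projection. A preliminary observation is that the $i$ segment points are collinear with unit spacing on the middle line, so any Hamilton path through them has length at least the span $i-1$, with equality only when the path starts and ends at the two extremes $g_a,g_{a+i-1}$; a non-linear traversal adds a strictly positive amount and can only strengthen the target bound, so I may assume a linear traversal. The insertion cost is then $D_X+(i-1)+D_Y-|XY|$, where $D_X,D_Y$ are the Euclidean distances from $X$ and $Y$ to the two extremes of the segment in some matching.

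In the first case, where neither $g_1$ nor $g_{n-2}$ lies in the segment, Lemma~4 of \cite{DDR1994} forces $X=(p,d)$, $Y=(p+1,d)$, so $|XY|=1$ and the cost equals $D_X+D_Y+(i-2)$. I bound $D_X+D_Y$ in two complementary ways. Vertically, each of $D_X,D_Y$ has vertical component $d$, so $D_X+D_Y\ge 2d$. Horizontally, setting $u=p-(a+1)$ and $v=(p+1)-(a+i)$ for the ``parallel'' matching gives $|u|+|v|\ge|u-v|=i-2$, while the ``crossed'' matching yields a horizontal sum of at least $i$ by the same triangle inequality; hence $D_X+D_Y\ge i-2$. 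Combining the vertical and horizontal bounds gives $D_X+D_Y\ge\max(2d,i-2)$, as required.

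In the second case, assume without loss of generality that $g_1$ belongs to the segment, so it has x-coordinates $2,3,\ldots,i+1$, and I verify the bound $\max(i,2i-d)=(i-d)+\max(d,i)$ for each of the three possible insertion locations. A bottom-bottom insertion still satisfies the Case~1 bound $(i-2)+\max(2d,i-2)$; using $d\ge 4$, one checks that this dominates $\max(i,2i-d)$ in every regime (the only non-trivial step is $2(i-2)\ge 2i-d$, which is exactly $d\ge 4$). For an insertion between $(1,d)$ and $(1,2d)$ the removed edge has length $d$; the two orientations cost $\sqrt{i^2+d^2}+i-d$ and $\sqrt{1+d^2}+2i-1-d$, and applying $\sqrt{x^2+d^2}\ge\max(x,d)$ to each yields lower bounds of $\max(i,2i-d)$ and $2i-1\ge\max(i,2i-d)$, respectively. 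For an insertion between $(n,d)$ and $(n,2d)$, the endpoint $g_1$ must be connected to $(n,2d)$ across horizontal distance $n-2$, so the cost grows at least linearly in $n$ and easily exceeds $\max(i,2i-d)$ since $i\le n-2$.

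The main obstacle is the second case: three insertion locations must be checked, and for the side-edge insertion the two orientations have to be bounded separately because neither dominates the other uniformly in $i$ and $d$. The projection-onto-axis trick from Case~1 is what keeps the bookkeeping manageable, replacing a minimisation over the integer position $p$ with a single triangle-inequality argument; the hypothesis $d\ge 4$ surfaces precisely at the step where a very long segment inserted between bottom-line points is compared to the target bound.
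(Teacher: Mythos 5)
Your proof is correct and takes essentially the same route as the paper: Lemma~4 of~\cite{DDR1994} pins a generic segment between two adjacent bottom-line points, the two connecting edges are bounded by their vertical and horizontal projections to give $i-2+\max\{2d,\,i-2\}$, and for a segment containing $g_1$ or $g_{n-2}$ the alternative insertion sites are compared, with $d\ge 4$ entering exactly where the bottom-line bound must dominate $i-d+\max\{d,i\}$. The only differences are added bookkeeping (both orientations at the left vertical hull edge, and the far-right edge), which the paper subsumes in its two pictured possibilities.
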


\begin{proof}
As already mentioned above it follows from Lemma~4 in~\cite{DDR1994} that a segment 
that contains neither $g_1$ nor $g_{n-2}$ must be inserted between two adjacent points from the lower
line of points in $G(n,d)$. 
The total cost of inserting such a segment is $i-1$ for the horizontal connection of the
$i$ points in the segment plus the two connections from the end points of the segment to two adjacent points in
the lower line of points in $G(n,d)$ minus 1 (for the edge that is removed between the two adjacent points in 
$\cal B$).
Thus we get a lower bound of $i-2 + \max\{2d, i-2\}$.

If $g_1$ or $g_{n-2}$ is contained in the segment then there are two possibilities to 
insert the segment which are shown in Figure~\ref{fig:insert_g1}.
For case b) we get again the lower bound $i-2 + \max\{2d, i-2\}$ while in case a) we have a lower bound
of $i-d+\max\{d, i\}$. 
Since $i-2 + \max\{2d, i-2\} \ge i-d+\max\{d, i\}$ for $d\ge 4$ the result follows. 
\end{proof}

\begin{figure}[ht]
\centering
\begin{tabular}{c@{\hspace*{1.2cm}}c}
\begin{tikzpicture}[scale=0.5]
\draw[blue] ( 0, 6) -- ( 0, 3);

\foreach \x in {0,1,2,3,4,5}
   \draw[blue] (\x, 3) -- (\x+1, 3);

\foreach \x in {0,1,2,3,4,5,6}
   \draw[blue] (\x, 0) -- (\x+1, 0);

\draw[blue] ( 6, 3) -- ( 0, 0);

\foreach \x in {0,1,2,3,4,5,6}
	\foreach \y in {0,3}
		\fill (\x, \y) circle(1mm);

\foreach \x in {1,2,3}
  \draw (\x,3) node[anchor = south] {\small $g_{\x}$};
\draw (4,3) node[anchor = south] {\small $\ldots$};
\end{tikzpicture}&
\begin{tikzpicture}[scale=0.5]
\draw[blue] ( 0, 6) -- ( 0, 3);
\draw[blue] ( 0, 3) -- ( 0, 0);

\foreach \x in {1,2,3,4,5}
   \draw[blue] (\x, 3) -- (\x+1, 3);

\foreach \x in {0,1,2,4,5,6}
   \draw[blue] (\x, 0) -- (\x+1, 0);

\draw[blue] ( 1, 3) -- ( 3, 0);
\draw[blue] ( 6, 3) -- ( 4, 0);

\foreach \x in {0,1,2,3,4,5,6}
	\foreach \y in {0,3}
		\fill (\x, \y) circle(1mm);

\foreach \x in {1,2,3}
  \draw (\x,3) node[anchor = south] {\small $g_{\x}$};
\draw (4,3) node[anchor = south] {\small $\ldots$};
\end{tikzpicture}\\[2mm]
a) & b)
\end{tabular}
\caption{The two possibilities to insert into $\cal B$ a segment containing the point $g_1$.}
\label{fig:insert_g1}
\end{figure}
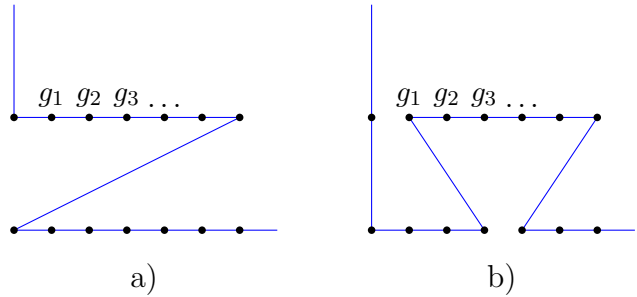

We now can compute a lower bound on the length of an optimum TSP tour for $G(n,d)$ as follows.

\begin{lemma}\label{lemma:approxlength}
Let $d\ge 4$. Then an optimum TSP tour for $G(n,d)$ has length at least 
$4n + 2d-2-2n/(d+1)$.
\end{lemma}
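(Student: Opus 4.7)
The plan is to combine the preceding structural lemma with Lemma~\ref{lemma:insertioncost}. First, I would compute the length of an optimum TSP tour restricted to ${\cal B}$: visiting the $2n+2$ boundary points in convex-hull cyclic order yields two horizontal runs of length $n-1$ (along the top and bottom rows) and two vertical sides of length $2d$ each (passing through the middle-row endpoints), for a total of $2n - 2 + 4d$. By the structural lemma, an optimum tour of $G(n,d)$ has length $2n - 2 + 4d + \sum_j c_j$, where $c_j$ is the insertion cost of the $j$th segment $S_j$, of size $i_j$, with $\sum_j i_j = n - 2$.

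Introduce $f(i) := i - 2 + \max\{2d,\, i-2\}$ and $g(i) := i - d + \max\{d,\, i\}$, and call a segment \emph{special} if it contains $g_1$ or $g_{n-2}$; at most two segments can be special. Lemma~\ref{lemma:insertioncost} gives $c_j \ge f(i_j)$ for non-special $S_j$ and $c_j \ge g(i_j)$ otherwise. The key linear bound I would establish is $f(i) \ge \tfrac{2d}{d+1}\, i$ for all $i \ge 1$; this follows by minimizing $f(i)/i$, which equals $1 + (2d - 2)/i$ for $i \le 2d + 2$ and $2 - 4/i$ for $i \ge 2d + 2$, both taking the value $\tfrac{2d}{d+1}$ at $i = 2d + 2$. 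The second ingredient is the trivial $g(i) \ge i$, tight exactly when $i \le d$.

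Let $T$ be the total size of the special segments. Combining the two bounds yields $\sum_j c_j \ge T + \tfrac{2d}{d+1}(n - 2 - T)$, a strictly decreasing function of $T$. For $T \le 2d$ the summand $\sum_{\text{special}} g(i_j) \ge T$ is attainable by choosing each special segment of size at most $d$; for $T > 2d$, some special segment necessarily has size exceeding $d$ where $g(i) = 2i - d$, and a short calculation shows this forces $\sum_j c_j$ strictly above the value at $T = 2d$. Hence the minimum of $\sum_j c_j$ is
\[
    2d + \tfrac{2d}{d+1}(n - 2 - 2d) \;=\; \tfrac{2d(n-d-1)}{d+1} \;=\; \tfrac{2dn}{d+1} - 2d,
\]
attained by taking two special segments of size exactly~$d$.

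Adding the base length gives tour length at least $2n + 2d - 2 + \tfrac{2dn}{d+1}$, and the identity $2n + \tfrac{2dn}{d+1} = \tfrac{2n(2d+1)}{d+1} = 4n - \tfrac{2n}{d+1}$ rearranges this to the claimed bound $4n + 2d - 2 - \tfrac{2n}{d+1}$. The main obstacle is the case analysis of the previous paragraph: verifying that $T = 2d$ with two size-$d$ special segments dominates every alternative, including the degenerate case of a single segment equal to all of ${\cal G}$ (which gives $g(n-2) = 2n - d - 4$, exceeding the target for $d \ge 4$) and the regimes with $T > 2d$ or with a special segment of size greater than $d$. Each alternative is a short linear optimization thanks to the piecewise-linear shapes of $f$ and $g$.
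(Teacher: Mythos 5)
Your proof is correct, and it reaches the stated bound by a genuinely different route than the paper. The paper keeps the two branches of the bound in Lemma~\ref{lemma:insertioncost} separate: it sums $z_i-2+2d$ over all segments to get one aggregate bound $3n+2k(d-1)$, sums $2z_i-4$ to get a second bound $4(n-k)+2d+2$, absorbs the end segments by the crude corrections $-4d+4$ and $-2d+8$, and then takes $\min_k \max$ of the two linear functions of the number of segments $k$, evaluating at their intersection $k=1+n/(2d+2)$. You instead convexify per segment: the tangent-line estimate $f(i)=i-2+\max\{2d,i-2\}\ge \tfrac{2d}{d+1}\,i$ (your computation of the minimum of $f(i)/i$ at $i=2d+2$ is exactly the paper's intersection point in disguise, since $2d/(d+1)$ is the per-point rate of both of the paper's bounds at segment size $2d+2$), combined with $g(i)\ge i$ for the at most two end segments, and then a one-variable optimization over the total special size $T$. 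The price is the case analysis you sketch for $T>2d$ and for a single segment equal to all of $\mathcal G$; I checked these and they do go through (for two end segments of sizes $a+b=T>2d$ convexity of $g$ gives $g(a)+g(b)\ge 2T-2d\ge 2d+\tfrac{2d}{d+1}(T-2d)$, the one-special-segment optimum $d+\tfrac{2d}{d+1}(n-2-d)$ exceeds the two-segment optimum, and $g(n-2)=2n-d-4$ exceeds $\tfrac{2d(n-d-1)}{d+1}$ for $d\ge 4$), and the resulting bound $2n+2d-2+\tfrac{2dn}{d+1}=4n+2d-2-\tfrac{2n}{d+1}$ matches the claim; note also that your bound remains valid (just not attained) when $n-2<2d$, since your function of $T$ is decreasing. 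What your approach buys is a tighter, more transparent treatment of the end segments and no optimization over $k$ at all; what the paper's buys is brevity, since the crude $-4d+4$ and $-2d+8$ corrections let it skip your case distinctions entirely. One small wording caveat: you should phrase the conclusion as a lower bound on $\sum_j c_j$ (the minimum of your piecewise-linear relaxation), not as the minimum of the actual insertion costs, which need not be attained; this does not affect the argument.
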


\begin{proof}
For $k$ with $1\le k \le n-2$ let $z_1, z_2, \ldots, z_k$ be the number of points contained in the segments of $\cal G$ 
that are inserted into $\cal B$ for an optimum TSP tour of $G(n,d)$. 
We have $\sum_{i=1}^k z_i = n-2$.
The boundary of the convex hull for the point set $\cal B$ has length $2n-2+4d$. 
 
By Lemma~\ref {lemma:insertioncost} 
the total length of the tour is at least 
\begin{eqnarray*}
 2n-2+4d+\sum_{i=1}^{k} (z_i-2+2d) -4d + 4 & = & 3n + 2k (d-1) 
\end{eqnarray*} 
where the term $-4d + 4\le 0$ is for adjusting for the at most two segments containing $g_1$ and $g_{n-2}$.

On the other hand by Lemma~\ref {lemma:insertioncost} 
the total length of the tour is at least 
\begin{eqnarray*}
 2n-2+4d+\sum_{i=1}^{k} (2z_i-4) - 2d + 8 & = & 4n + 2 - 4k + 2d 
\end{eqnarray*} 
Here we need the term $-2d+8\le 0$ to adjust for
the at most two segments containing $g_1$ and $g_{n-2}$.

This shows that an optimum TSP tour for $G(n,d)$ has length at least
\begin{equation}
\min_{1\le k\le n-2} \max\{3n + 2k (d-1), 4 (n - k) + 2d + 2\}.
\label{eqn:twolowerbounds}
\end{equation}
The two functions $3n + 2k (d-1)$ and $4 (n - k) + 2d + 2$ are both linear in $k$ 
and the slopes have opposite sign. Thus, the expression (\ref{eqn:twolowerbounds})
is at least as large as the value at the intersection of these two linear functions.
The two linear functions intersect at $k=1+n/(2d+2)$. The value at the intersection point
is $4(n-1-n/(2d+2))+2d+2 = 4n +2d - 2 - 2n/(d+1)$.
This finishes the proof.
\end{proof}

Now we can state and prove our main result.

\begin{theorem}\label{thm:main}
Let $d$ be a function with $d(n) = \omega(1)$ and $d(n) = o(n)$. 
Then the integrality ratio of the subtour LP for the instances $G(n, d(n))$
converges to $4/3$ for $n\to \infty$. 
\end{theorem}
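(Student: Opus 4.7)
The plan is to establish both sides of the integrality ratio separately. For $\liminf \geq 4/3$, I combine the OPT lower bound of Lemma~\ref{lemma:approxlength} with an upper bound on the LP value obtained from an explicit feasible fractional solution $x^*$. For $\limsup \leq 4/3$, I combine the trivial LP lower bound of $3n$ (coming from unit minimum edge length and the degree-$2$ constraints) with an explicit tour of length $4n + O(d)$.

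I construct $x^*$ as a half-integral vector: weight $1$ on each unit-length horizontal edge between consecutive points of each of the three lines, and weight $1/2$ on each of the six edges of the two ``triangles'' at columns $1$ and $n$ (joining top, middle, and bottom points in each boundary column). A straightforward check shows every vertex has degree exactly $2$, and the cost equals
$$ 3(n-1) + 4 \cdot \tfrac12 \cdot d + 2 \cdot \tfrac12 \cdot 2d = 3n - 3 + 4d. $$

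Feasibility is the main nontrivial point. I observe that $x^* = \tfrac{1}{2}(\chi_{H_1}+\chi_{H_2}+\chi_{H_3})$, where $H_1$ is the cycle visiting the top and middle lines, closed via the two length-$d$ verticals; $H_2$ analogously visits the middle and bottom lines; and $H_3$ visits the top and bottom lines, closed via the two length-$2d$ verticals. Since every vertex has degree $2$ under $x^*$, the subtour constraints are equivalent to $\sum_{e\in\delta(S)}x^*_e\ge 2$ for every nonempty proper $S$. For any such $S$, at least two of $V(H_1), V(H_2), V(H_3)$ must meet both $S$ and $V\setminus S$: otherwise, since every vertex lies in exactly two of these three sets and any two of them already cover $V$, a short case analysis forces $S=\emptyset$ or $S=V$. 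Each such cycle crosses $\delta(S)$ at least twice, giving cut-weight at least $\tfrac12(2+2)=2$.

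For the complementary bound, every feasible LP point $x$ satisfies $\sum_e x_e = 3n$ by summing the degree constraints, and since every edge has length at least $1$, the LP cost is at least $3n$. An explicit tour that splits $\cal G$ into $\Theta(n/d)$ roughly equal-sized segments and inserts each between two adjacent points of ${\cal B}_l$ (as suggested by the balancing in the proof of Lemma~\ref{lemma:approxlength}) has total length $4n + O(d)$, so $\mathrm{OPT} \le 4n + O(d)$. Combining with Lemma~\ref{lemma:approxlength} and the cost of $x^*$, the integrality ratio is sandwiched between
$$ \frac{4n+2d(n)-2-2n/(d(n)+1)}{3n-3+4d(n)} \quad \text{and} \quad \frac{4n+O(d(n))}{3n}; $$
under $d(n)=\omega(1)$ and $d(n)=o(n)$ both quantities tend to $4/3$, so the integrality ratio converges to $4/3$. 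The hardest step I anticipate is the feasibility verification of $x^*$, but the three-cycle decomposition reduces this to the short set-theoretic observation about $V(H_1), V(H_2), V(H_3)$ used above.
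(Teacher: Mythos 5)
The first half of your argument (the direction $\liminf \ge 4/3$) is correct and is essentially the paper's proof: Lemma~\ref{lemma:approxlength} for the tour lower bound plus an explicit half-integral LP solution of cost $3n+O(d)$. Your fractional point differs slightly from the one in Figure~\ref{fig:newsubtourbound} (you complete all three horizontal lines and put weight $1/2$ on the two boundary ``triangles'', cost $3n-3+4d$ instead of $3n-4+3d+\sqrt{d^2+1}$), and your feasibility proof via $x^*=\tfrac12(\chi_{H_1}+\chi_{H_2}+\chi_{H_3})$, the degree-$2$ reduction of the subtour constraints to $x^*(\delta(S))\ge 2$, and the observation that any nonempty proper $S$ is crossed by at least two of the three cycles, is a clean and valid argument -- indeed more explicit than the paper, which only asserts feasibility of its figure. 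Your LP lower bound $3n$ (degree constraints plus minimum distance $1$, valid once $d(n)\ge 1$) is also fine.

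The gap is in your upper bound on the optimum tour. A tour that splits $\cal G$ into $\Theta(n/d)$ equal segments does \emph{not} have length $4n+O(d)$. Using Lemma~\ref{lemma:cheapestinsertion}, inserting $k$ equal segments of size $s=(n-2)/k$ costs $\sum_{i=1}^k\bigl(s-2+\sqrt{(s-2)^2+4d^2}\bigr)$ on top of the hull length $2n-2+4d$, giving a total of $3n-4+4d-2k+\sqrt{(n-2-2k)^2+4d^2k^2}$. With $k=\alpha n/d$ for a fixed $\alpha>0$ this is $\bigl(3+\sqrt{1+4\alpha^2}\,\bigr)n+o(n)$, i.e.\ $4n+\Theta(n)$, and since $d=o(n)$ the excess is not $O(d)$. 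In particular the balancing value $k\approx n/(2d+2)$ from the proof of Lemma~\ref{lemma:approxlength} is only the crossing point of the two \emph{lower} bounds, not the segment count of a good tour; with $\alpha=1/2$ your construction yields about $(3+\sqrt2)n\approx 4.41n$ and hence only $\limsup\le 1.47$, not $4/3$. The repair is immediate: take $k=1$, or simply the tour that snakes through the three rows (top row, vertical of length $d$, middle row, vertical of length $d$, bottom row, closing edge of length $\sqrt{(n-1)^2+4d^2}\le n-1+2d$), which has length at most $4n-4+4d=4n+O(d)$; combined with $\mathrm{LP}\ge 3n$ this gives $\limsup\le 4/3$. (For what it is worth, the paper's own proof of the theorem only spells out the lower-bound direction, so your attention to the upper bound is a genuine addition -- but as written that step fails.)
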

\begin{proof}
As $d(n) = \omega(1)$ we may assume that $d(n) \ge 4$.
By Lemma~\ref{lemma:approxlength} the length of any TSP tour for $G(n, d(n))$ is at least
$4n + 2d-2-2n/(d+1)$.

Figure~\ref{fig:newsubtourbound} shows a feasible solution for the subtour LP for $G(n, d(n))$ with
cost $3n-4 + 3 d(n) + \sqrt{d(n)^2 + 1} \le 3n + 4 d(n)$.
Thus, the integrality ratio for $G(n, d(n))$ is at least
\begin{equation}\label{eq:lowerbound}
 \frac {4n + 2d(n)-2-2n/(d(n)+1)}{3n + 4 d(n)}
\end{equation}
 which tends to $4/3$ for $n\to \infty$
as $d(n) = \omega(1)$ and $d(n) = o(n)$.
\end{proof}

\begin{figure}[ht]
\centering
\begin{tikzpicture}[scale=0.5]
\foreach \x in {0,1,2,3,4,5,6,7,8,9,10,11,12,13,14,15,16}
   \draw[blue] (\x, 6) -- (\x+1, 6);

\foreach \x in {1,2,3,4,5,6,7,8,9,10,11,12,13,14,15}
   \draw[blue] (\x, 3) -- (\x+1, 3);

\foreach \x in {0,1,2,3,4,5,6,7,8,9,10,11,12,13,14,15,16}
   \draw[blue] (\x, 0) -- (\x+1, 0);

\draw[blue] ( 0, 6) -- ( 0, 3);
\draw[blue] (17, 6) -- (17, 3);

\draw[blue, dashed] ( 0, 3) -- ( 0, 0);
\draw[blue, dashed] ( 1, 3) -- ( 0, 0);
\draw[blue, dashed] ( 0, 3) -- ( 1, 3);
\draw[blue, dashed] (17, 3) -- (17, 0);
\draw[blue, dashed] (16, 3) -- (17, 0);
\draw[blue, dashed] (16, 3) -- (17, 3);

\foreach \x in {0,1,2,3,4,5,6,7,8,9,10,11,12,13,14,15,16,17}
	\foreach \y in {0,3,6}
		\fill (\x, \y) circle(1mm);
\end{tikzpicture}
\caption{A feasible solution to the subtour LP for the instance $G(18,3)$. 
The dashed lines correspond to variables with value~$1/2$ while all other lines 
correspond to variables with value~1.}
\label{fig:newsubtourbound}
\end{figure}
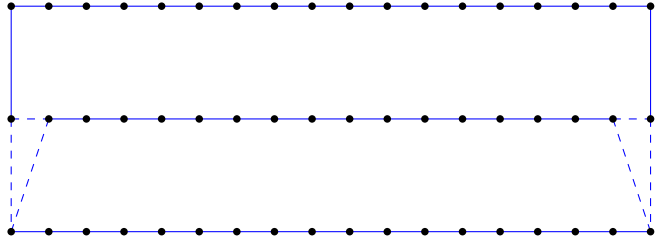

The proofs of Lemma~3 and Lemma~4 in~\cite{DDR1994} make only use of the fact
that optimum TSP tours for certain subsets of the point set do not intersect itself. 
This property holds in case of the instance $G(n,d)$ for all $L^p$-norms with $p\in\mathbb{N}$.
Moreover, the proofs of Lemma~\ref{lemma:insertioncost}
and Lemma~\ref{lemma:approxlength} also hold for arbitrary $L^p$-norms.
Therefore, Theorem~\ref{thm:main} holds for all $L^p$-norms with $p\in\mathbb{N}$,
as $3n+4d(n)$ is an upper bound for an optimum solution for the subtour LP for
any $L^p$-norm .

\section{The exact integrality ratio for $G(n, \sqrt{n-1})$}\label{sec:ExactFormula}

The lower bound~(\ref{eq:lowerbound}) proven in Section~\ref{sec:construction} attains its maximum
for $d(n) = \Theta(\sqrt{n})$. In this section we prove an explicit formula
for the integrality ratio of the instances $G(n, \sqrt{n-1})$. This leads to an improved convergence of
the proven lower bound. The structural results for an optimum TSP tour in the instances 
$G(n,d)$ proven in this section may be of independent interest. 

\begin{lemma}\label{lemma:cheapestinsertion}
For the instance $G(n,d)$ the cheapest cost of inserting into $\cal B$ a segment of $k$ points of $\cal G$ 
that contains neither $g_1$ nor $g_{n-2}$ is
$$ \begin{cases} k-2+\sqrt{(k-2)^2+4\cdot d^2} & \text{if $k$ is even}\\[2mm]
                         k-2+ \frac12 \cdot\left(\sqrt{(k-1)^2+4\cdot d^2} + \sqrt{(k-3)^2+4\cdot d^2}\right)
                         & \text{if $k$ is odd}
  \end{cases}$$
\end{lemma}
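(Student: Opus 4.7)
The plan is to reduce the problem to a one-dimensional integer optimization that depends only on $k$ and $d$, and then handle the parity of $k$ separately. By Lemma~4 of~\cite{DDR1994}, which was already invoked in the proof of Lemma~\ref{lemma:insertioncost}, a segment of $k$ consecutive points of~$\cal G$ that contains neither $g_1$ nor $g_{n-2}$ must be inserted between two adjacent points of the lower line of $G(n,d)$. Hence the cost of the insertion equals the length of the shortest path that goes from one of these two adjacent lower points to the other while visiting all $k$ segment points, minus $1$ for the removed unit edge. Placing the lower line at $y=0$ and the segment at the $k$ points $(1,d),(2,d),\ldots,(k,d)$, we may take the adjacent lower pair to be $(b,0)$ and $(b+1,0)$ for an integer $b$ that we are free to optimize over.

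The main structural step is to argue that in an optimum insertion the segment is traversed monotonically, i.e., the path enters the segment at one of its two extreme points and exits at the other. This follows from a standard backtracking argument: any Hamilton path on $k$ collinear unit-spaced points has length at least $k-1$, with equality iff its two endpoints are the extreme points of the set, and if one enters or leaves at an interior point the required backtracking strictly increases the internal length by an amount that cannot be offset by any savings in the two connecting edges to $(b,0)$ and $(b+1,0)$. Consequently the insertion cost equals
$$k - 2 + \sqrt{(b-1)^2 + d^2} + \sqrt{(b+1-k)^2 + d^2},$$
where $k-1$ comes from the monotone traversal of the segment and $-1$ accounts for the removed edge, combining to~$k-2$.

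It remains to minimize, with $u := b-1 \in \mathbb{Z}$, the function $F(u) := \sqrt{u^2 + d^2} + \sqrt{(u-(k-2))^2 + d^2}$. Since $F$ is strictly convex and symmetric about $u = (k-2)/2$, its integer minimizers are precisely the integers nearest to $(k-2)/2$. If $k$ is even, $u = (k-2)/2$ is itself an integer and substitution yields $F(u) = \sqrt{(k-2)^2 + 4d^2}$, giving the even case of the lemma. If $k$ is odd, the two nearest integers $u \in \{(k-3)/2, (k-1)/2\}$ are equidistant from $(k-2)/2$ and both give $F(u) = \tfrac{1}{2}\bigl(\sqrt{(k-1)^2 + 4d^2} + \sqrt{(k-3)^2 + 4d^2}\bigr)$, matching the odd case. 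The monotone-traversal claim is the only delicate point in the argument; the remaining parity analysis is a short direct calculation.
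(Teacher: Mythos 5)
Your proposal is correct and follows essentially the same route as the paper: restrict to insertion between two adjacent lower-line points via Lemma~4 of~\cite{DDR1994}, write the cost as $k-2+\sqrt{(a-1)^2+d^2}+\sqrt{(k-a-1)^2+d^2}$, and minimize over the integer insertion position, splitting by the parity of $k$. You merely make explicit two steps the paper leaves implicit (the monotone traversal of the segment and the convexity/symmetry argument locating the integer minimizer), which is fine.
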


\begin{proof}
By shifting the instance $G(n,d)$ we may assume that the $k$ points in the segment have the
coordinates $(i, d)$ for $i=1, \ldots, k$.
The cost of inserting this segment between the two points $(a,0)$ and $(a+1, 0)$
with $a\in\{1,\ldots,k\}$ is $(k-1) + \sqrt{(a-1)^2+d^2} + \sqrt{(k-a-1)^2+d^2} - 1$.
The minimum is attained for $a=\lfloor \frac k 2 \rfloor$.
\end{proof}

If a segment of $\cal G$ contains $g_1$ or $g_{n-2}$ then there exist two different possibilities for
inserting it into $\cal B$, see Figure~\ref{fig:insert_g1}. The following lemma states, that
if $d$ is sufficiently large then possibility a) will be cheaper. 
For the proof of this lemma and in some other proofs we will make use of the inequality
\begin{equation}\label{eq:sqrt}
a+\sqrt{b^2+c} ~ \ge ~ \sqrt{(a+b)^2 +c} ~~~~~ \text{ for all } a,b,c \in \mathbb{R_+}
\end{equation} 

\begin{lemma}\label{lemma:cheapest_insertion_g1}
Let $d\ge 4$.
For the instance $G(n,d)$ the cheapest cost of inserting into $\cal B$ a segment of $k$ points of $\cal G$ 
that contains either $g_1$ or $g_{n-2}$ is
$$ k-d+\sqrt{k^2+d^2}.$$
\end{lemma}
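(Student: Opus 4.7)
My plan is to analyze the two insertion patterns shown in Figure~\ref{fig:insert_g1} for a segment containing $g_1$ (the case of $g_{n-2}$ is symmetric), compute the exact cost of pattern~(a), and show that pattern~(b) can never do better when $d\ge 4$. After shifting coordinates I may assume the segment consists of the points $(2, 2d), (3, 2d), \ldots, (k+1, 2d)$.

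For pattern~(a), the segment is inserted by replacing the corner edge from $(1, 2d)$ to $(1, d)$ (of length $d$) with the path $(1, 2d) \to g_1 \to g_2 \to \cdots \to g_k \to (1, d)$, whose total length is $1 + (k-1) + \sqrt{k^2 + d^2}$. This gives insertion cost $k - d + \sqrt{k^2 + d^2}$. The opposite orientation of the segment yields cost $2k - 1 - d + \sqrt{1 + d^2}$, which I would check is no smaller via a single application of inequality~(\ref{eq:sqrt}).

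For pattern~(b), I would repeat the argument of Lemma~\ref{lemma:cheapestinsertion}: inserting the segment between the adjacent bottom-line points $(a, d)$ and $(a+1, d)$ has cost $k - 2 + \sqrt{(a-2)^2 + d^2} + \sqrt{(a-k)^2 + d^2}$, minimised near $a \approx (k+2)/2$ and producing the same formulas as in Lemma~\ref{lemma:cheapestinsertion}. Since $x \mapsto \sqrt{x^2 + 4d^2}$ is convex, the odd-$k$ expression $\tfrac{1}{2}(\sqrt{(k-1)^2 + 4d^2} + \sqrt{(k-3)^2 + 4d^2})$ is at least $\sqrt{(k-2)^2 + 4d^2}$, so pattern~(b) always costs at least $k-2 + \sqrt{(k-2)^2 + 4d^2}$. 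It therefore suffices to establish the single inequality
\[
(d-2) + \sqrt{(k-2)^2 + 4d^2} ~\ge~ \sqrt{k^2 + d^2}
\]
for all $k\ge 1$ and $d\ge 4$. Squaring both sides and bounding $\sqrt{(k-2)^2 + 4d^2} \ge |k-2|$ reduces the claim to $4d^2 - 8d + 16 + 2k(d-4) \ge 0$, which is manifestly true for $d\ge 4$.

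The main obstacle I foresee is the initial case split: one must justify that every insertion of a segment containing $g_1$ falls into pattern~(a) or pattern~(b). This follows from the structure imposed by Lemma~3 and Lemma~4 of~\cite{DDR1994}, already invoked in the proof of Lemma~\ref{lemma:insertioncost}. I would also note that the hypothesis $d\ge 4$ is asymptotically sharp: for large $k$ both insertion costs grow like $2k$, with leading constants $-d$ and $-4$ respectively, so pattern~(a) dominates precisely when $d\ge 4$.
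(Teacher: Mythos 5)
Your proposal is correct and follows essentially the same route as the paper: compute the exact cost $k-d+\sqrt{k^2+d^2}$ for pattern~(a), lower-bound pattern~(b) by $k-2+\sqrt{(k-2)^2+4d^2}$ via Lemma~\ref{lemma:cheapestinsertion}, and show the latter dominates for $d\ge 4$. The only difference is cosmetic: the paper proves the comparison inequality by checking $h(4)\ge 0$ with~(\ref{eq:sqrt}) and showing $h'(d)>0$, whereas you square and use $\sqrt{(k-2)^2+4d^2}\ge k-2$, which is equally valid (and your extra checks on the reversed orientation and the case split are fine but not essential).
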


\begin{proof}
There exist two different possibilities to insert into $\cal B$ a segment of $k$ points of $\cal G$ 
that contains the point $g_1$. These two possibilities are shown in Figure~\ref{fig:insert_g1}a) and b).
In case a) the cost of inserting the segment is $k-d+\sqrt{k^2+d^2}.$
In case b) the cost of insertion is by Lemma~\ref{lemma:cheapestinsertion} at least
$k-2+\sqrt{(k-2)^2+4\cdot d^2}$.
We claim that for $d\ge 4$ we have  $k-d+\sqrt{k^2+d^2} \le k-2+\sqrt{(k-2)^2+4\cdot d^2}$.
This is equivalent to the statement that for $d\ge 4$ we have 
$$ h(d) ~:=~ d-2+\sqrt{(k-2)^2+4\cdot d^2}-\sqrt{k^2+d^2} ~\ge~ 0. $$ 
Now, $h(4) =  2+\sqrt{(k-2)^2+64}-\sqrt{k^2+16} \ge \sqrt{k^2+64}-\sqrt{k^2+16} \ge 0$ by (\ref{eq:sqrt}).
Moreover, we have $h'(d) = \frac{4 d}{\sqrt{4 d^2+(k-2)^2}}-\frac{d}{\sqrt{d^2+k^2}}+1 > 0$
which proves the claim. ~~ ~ ~\hspace*{1cm}
\end{proof}

From Lemma~\ref{lemma:cheapestinsertion} and Lemma~\ref{lemma:cheapest_insertion_g1} it
follows that for $d\ge 4$ there always exists an optimum tour of $G(n,d)$ 
which has a \textit{z-structure} as shown in Figure~\ref{fig:z-structure}. This means
that the tour consists of a sequence of alternately oriented z-shaped paths that cover all
points in the lower two lines. These z-shaped paths are connected by single edges of length~1
and the tour is closed by adding the top most horizontal line and two vertical connections to 
the middle line. Each tour with such a z-structure can be specified by a \textit{z-vector}
which contains as entries the number of points covered by each z on the middle line.

\begin{figure*}[ht]
\centering
\begin{tikzpicture}[scale=0.496]
\tikzstyle{help lines}=[gray!90,very thin]

\foreach \x in {0,1,2,3,4,5,6,7,8,9,10,11,12,13,14,15,16,17,18,19,20,21,22,23,24,25,26}
   \draw[blue] (\x, 6) -- (\x+1, 6);

\draw[blue] ( 0, 6) -- ( 0, 3);
\draw[blue] (27, 6) -- (27, 3);

\foreach \x in {0,1,2, 4,5,6,7,8, 10,11,12,13,14,15,16 ,18,19,20,21,22,23, 25,26}
   \draw[blue] (\x, 3) -- (\x+1, 3);

\foreach \x in {0,1,2,3,4,5,  7,8,9,10,11,12, 14,15,16,17,18,19 ,21,22,23,24,25,26}
   \draw[blue] (\x, 0) -- (\x+1, 0);
   
\foreach \x in {0,1,2, 4,5, 7,8, 10,11,12, 14,15,16 ,18,19, 21,22,23, 25,26}
   \draw[blue, thick] (\x, 3) -- (\x+1, 3);

\foreach \x in {0,1,2 ,4,5,  7,8, 10,11,12, 14,15,16, 18,19 ,21,22,23 ,25,26}
   \draw[blue, thick] (\x, 0) -- (\x+1, 0);

\draw[blue, thick] ( 3, 3) -- ( 0, 0);
\draw[blue, thick] ( 4, 3) -- ( 6, 0);
\draw[blue, thick] ( 9, 3) -- ( 7, 0);
\draw[blue, thick] (10, 3) -- (13, 0);
\draw[blue, thick] (17, 3) -- (14, 0);
\draw[blue, thick] (18, 3) -- (20, 0);
\draw[blue, thick] (24, 3) -- (21, 0);
\draw[blue, thick] (25, 3) -- (27, 0);

\draw[style=help lines] ( 0,-0.5) -- ( 0,3.5);
\draw[style=help lines] ( 3,-0.5) -- ( 3,3.5);
\draw[style=help lines] ( 4,-0.5) -- ( 4,3.5);
\draw[style=help lines] ( 6,-0.5) -- ( 6,3.5);
\draw[style=help lines] ( 7,-0.5) -- ( 7,3.5);
\draw[style=help lines] ( 9,-0.5) -- ( 9,3.5);
\draw[style=help lines] (10,-0.5) -- (10,3.5);
\draw[style=help lines] (13,-0.5) -- (13,3.5);
\draw[style=help lines] (14,-0.5) -- (14,3.5);
\draw[style=help lines] (17,-0.5) -- (17,3.5);
\draw[style=help lines] (18,-0.5) -- (18,3.5);
\draw[style=help lines] (20,-0.5) -- (20,3.5);
\draw[style=help lines] (21,-0.5) -- (21,3.5);
\draw[style=help lines] (24,-0.5) -- (24,3.5);
\draw[style=help lines] (25,-0.5) -- (25,3.5);
\draw[style=help lines] (27,-0.5) -- (27,3.5);

\foreach \x in {0,1,2,3,4,5,6,7,8,9,10,11,12,13,14,15,16,17,18,19,20,21,22,23,24,25,26,27}
	\foreach \y in {0,3,6}
		\fill (\x, \y) circle(1mm);
\end{tikzpicture}
\caption{The z-structure of a TSP tour for the instance $G(28,3)$. The tour shown has the z-vector
$(4,3,3,4,4,3,4,3)$.}
\label{fig:z-structure}
\end{figure*}
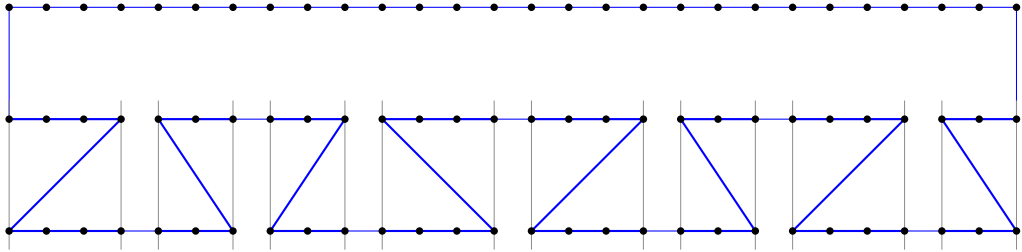

The length of a tour can now easily be computed using its z-vector.
Set 
$$ c(i) ~:=~ 2(i-1) + \sqrt{(i-1)^2 + d^2}.$$ 
Then $c(i)$ is the length of a z-shaped path covering $i$ points in the middle row.

\begin{lemma}
\label{lemma:TotalTourCost}
Let $d\ge 4$ and $k\ge 2$. Then the total length of a TSP tour for $G(n,d)$ 
corresponding to a z-vector $(z_1, z_2, \ldots, z_k)$ is
$$ n + k + 2 d - 2 +  \sum_{i=1}^k c(z_i) .$$
\end{lemma}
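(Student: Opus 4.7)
The plan is to decompose the tour into disjoint edge‑sets whose lengths I can add up, and match the total against the claimed expression. By the preceding discussion a z‑structured tour is composed of: (i) the top horizontal line, (ii) the two vertical edges of length~$d$ joining the top line to the middle line at $x=1$ and $x=n$, (iii) the $k$ z‑shaped paths covering the middle and bottom rows, and (iv) the single unit‑length edges that glue consecutive z‑shapes together. I intend to compute the contribution of each of these four groups separately and show that their sum collapses to $n+k+2d-2+\sum_{i=1}^k c(z_i)$.

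For (i) the top line consists of $n-1$ horizontal unit edges, contributing $n-1$. For (ii) the two vertical edges contribute $2d$. For (iii) I will argue that a single z‑shape covering $z_i$ consecutive middle points also covers exactly $z_i$ consecutive bottom points, with its three pieces being a middle horizontal of length $z_i-1$, a bottom horizontal of length $z_i-1$, and one diagonal of length $\sqrt{(z_i-1)^2+d^2}$, giving total length $c(z_i)$. This is precisely the cheapest insertion cost of a $z_i$‑segment, as established in Lemma~\ref{lemma:cheapestinsertion}, so after summation the $k$ z‑shapes contribute $\sum_{i=1}^k c(z_i)$.

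For (iv) I would observe that when traversing the tour, each pair of consecutive z‑shapes is linked either along the middle row or along the bottom row (depending on the orientation of the two z's) by exactly one unit edge; since there are $k$ z‑shapes arranged in a linear sequence from the left vertical to the right vertical, the number of such connecting edges is $k-1$, contributing $k-1$. Adding up the four contributions yields
\[
(n-1) + 2d + \sum_{i=1}^k c(z_i) + (k-1) \;=\; n+k+2d-2+\sum_{i=1}^k c(z_i),
\]
which is the claimed total.

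The one step that requires a little care, and which I anticipate being the main (though still mild) obstacle, is the bookkeeping for (iv): one has to verify that adjacent z‑shapes really do alternate between sharing a bottom row connection and a middle row connection, so that the glue edges between z‑shapes are disjoint from the horizontal edges already accounted for inside each $c(z_i)$. This follows from the fact that the diagonals of consecutive z‑shapes in a z‑structure face in opposite directions, so consecutive z‑shapes meet end‑to‑end with exactly one intervening unit edge; once this is checked, no edge is double‑counted and no edge of the tour is missed, and the identity is immediate.
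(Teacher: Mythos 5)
Your decomposition is exactly the one in the paper's proof: the top row path of length $n-1$, the two vertical edges contributing $2d$, the $k$ z-shaped paths of total length $\sum_{i=1}^k c(z_i)$ (each being a middle run of length $z_i-1$, a bottom run of length $z_i-1$, and a diagonal of length $\sqrt{(z_i-1)^2+d^2}$), and the $k-1$ unit edges joining consecutive z-shapes, so the proposal is correct and follows the same route, just with the bookkeeping for the connecting edges spelled out more explicitly. One aside is inaccurate, though harmless: $c(z_i)=2(z_i-1)+\sqrt{(z_i-1)^2+d^2}$ is \emph{not} the insertion cost of Lemma~\ref{lemma:cheapestinsertion} (which is $k-2+\sqrt{(k-2)^2+4d^2}$ for even $k$ and refers to an insertion with two diagonals, measured as a cost difference rather than a path length), but your explicit three-piece computation already yields $c(z_i)$ without that reference.
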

\begin{proof}
The length of all the z-shaped subpaths of the tour is $\sum_{i=1}^k c(z_i)$. These $k$ subpaths are connected
by $k-1$ edges of length~1. In addition there is one path of length $n-1$ connecting all the points in the upper row
plus two vertical edges of length $d$ each which connect the upper row with the middle row.
\end{proof}

The next result shows that there always exists an optimum TSP tour for $G(n,d)$ which has a very special structure.

\begin{lemma}\label{lemma:zvector}
Let $d\ge 4$, $k\ge 2$ and $(z_1, z_2, \ldots, z_k)$ be the z-vector of an optimum TSP tour
for $G(n,d)$. Then $z_i \in \{\lfloor n/k \rfloor, \lceil n/k \rceil\}$.
\end{lemma}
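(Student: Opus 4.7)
The plan is to reduce the lemma to a discrete convexity argument. By Lemma~\ref{lemma:TotalTourCost}, any tour with z-vector $(z_1,\ldots,z_k)$ has length $n + k + 2d - 2 + \sum_{i=1}^k c(z_i)$, and since the $k$ z-shaped subpaths collectively cover all $n$ points of the middle row, every feasible z-vector satisfies $\sum_{i=1}^k z_i = n$. Optimizing the tour length therefore reduces to minimizing $\sum_{i=1}^k c(z_i)$ over positive integer $k$-tuples summing to $n$.

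The key analytic step is to verify strict discrete convexity of $c$, namely $c(i+1)-c(i) > c(i)-c(i-1)$ for every integer $i\ge 2$. A short computation gives
\[
c(i+1)-c(i) \;=\; 2 + \sqrt{i^2+d^2}-\sqrt{(i-1)^2+d^2},
\]
so the claim follows from the strict convexity of $x\mapsto\sqrt{x^2+d^2}$ on $[0,\infty)$, whose second derivative $d^2(x^2+d^2)^{-3/2}$ is positive. Thus the first differences of $c$ are strictly increasing.

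With convexity in hand, a standard exchange argument finishes the proof. Assume for contradiction that some optimum z-vector contains entries with $z_a \ge z_b+2$. Replacing the pair $(z_a,z_b)$ by $(z_a-1,z_b+1)$ yields another positive integer z-vector with the same sum, and its tour length changes by $\bigl(c(z_b+1)-c(z_b)\bigr) - \bigl(c(z_a)-c(z_a-1)\bigr)$; since $z_a-1\ge z_b+1$, strict convexity makes this quantity negative, contradicting optimality. Hence $\max_i z_i - \min_i z_i \le 1$, which forces $z_i\in\{\lfloor n/k\rfloor,\lceil n/k\rceil\}$.

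The only mild subtlety is confirming that the modified z-vector still corresponds to a realizable z-structure tour whose length is computed by Lemma~\ref{lemma:TotalTourCost}; this is built into the z-structure formalism developed above (any positive integer $k$-tuple summing to $n$ gives a valid tour), so no serious obstacle arises beyond the convexity calculation.
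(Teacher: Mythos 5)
Your proposal is correct and follows essentially the same route as the paper: reduce to minimizing $\sum_i c(z_i)$ subject to $\sum_i z_i = n$ via Lemma~\ref{lemma:TotalTourCost}, then apply the convexity-based exchange argument to an offending pair $z_a \ge z_b + 2$. You merely spell out details the paper leaves implicit (the strict convexity of $c$ and the realizability of the modified z-vector), which is fine.
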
  
\begin{proof}
Suppose this is not the case. Then there exist $i, j \in \{1,\ldots, k\}$ with $z_i \ge z_j+2$.
As the function $c$ is convex we have $c(z_i -1) + c(z_j+1) < c(z_i) + c(z_j)$.
By Lemma~\ref{lemma:TotalTourCost} this implies that $(z_1, z_2, \ldots, z_k)$ cannot be the
z-vector of an optimum TSP tour for $G(n,d)$.
\end{proof}

The following result shows how to bound the length of a TSP tour for $G(n,d)$ with given z-vector
solely by the length of the z-vector.

\begin{lemma}\label{lemma:TourLength}
Let $d\ge 4$ and $k\ge 2$. Then the length of an optimum TSP tour for $G(n,d)$ corresponding to a z-vector
$(z_1, z_2, \ldots, z_k)$ is at least
$$n+k+2d - 2 + k \cdot c(\frac n k)~ .$$
\end{lemma}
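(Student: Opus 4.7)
The plan is to combine Lemma~\ref{lemma:TotalTourCost}, which gives an exact formula for the tour length in terms of the z-vector, with Jensen's inequality applied to a suitably convex extension of $c$. Since Lemma~\ref{lemma:TotalTourCost} tells us that the tour length equals $n + k + 2d - 2 + \sum_{i=1}^{k} c(z_i)$, it suffices to establish the estimate $\sum_{i=1}^{k} c(z_i) \ge k \cdot c(n/k)$.

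First I would observe that the $k$ z-shaped subpaths in a z-structure tour cover every point of the middle row exactly once, so the entries of the z-vector satisfy $\sum_{i=1}^{k} z_i = n$.

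Next I would verify convexity of $c$, extended to real arguments by $c(x) := 2(x-1) + \sqrt{(x-1)^2 + d^2}$. The linear term is trivially convex, and the second derivative of $\sqrt{(x-1)^2 + d^2}$ equals $d^2 / ((x-1)^2 + d^2)^{3/2} > 0$, so $c$ is convex on all of $\mathbb{R}$.

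Jensen's inequality then yields
\begin{equation*}
\frac{1}{k}\sum_{i=1}^{k} c(z_i) \;\ge\; c\!\left(\frac{1}{k}\sum_{i=1}^{k} z_i\right) = c(n/k),
\end{equation*}
and substituting into the expression of Lemma~\ref{lemma:TotalTourCost} gives the claimed bound. There is no real obstacle: the result is a direct convexity estimate once the identity $\sum z_i = n$ is in hand. In fact, Lemma~\ref{lemma:zvector} shows that the Jensen bound is essentially tight, since an optimum z-vector already satisfies $z_i \in \{\lfloor n/k\rfloor, \lceil n/k\rceil\}$.
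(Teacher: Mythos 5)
Your proof is correct and follows essentially the same route as the paper: the exact length formula of Lemma~\ref{lemma:TotalTourCost} combined with $\sum_{i=1}^k z_i = n$ and the convexity of $c$ (Jensen's inequality) gives $\sum_{i=1}^k c(z_i) \ge k\cdot c(n/k)$, which is exactly the paper's final step. The only (harmless) difference is that you bypass the paper's appeal to Lemma~\ref{lemma:zvector} and the sorting of the $z_i$, which are indeed not needed for this bound.
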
  
\begin{proof}
If $(z_1, z_2, \ldots, z_k)$ is an optimum TSP tour for $G(n,d)$ then we may assume that 
$z_1 \le z_2\le \ldots \le z_k$ as reordering the $z_i$-values does not change the length of the tour.
Together with Lemma~\ref{lemma:zvector} this shows that there exists an optimum TSP tour for $G(n,d)$
which corresponds to the z-vector $z_1, \ldots, z_j, z_{j+1}, \ldots, z_k$ with
$z_1 = \ldots = z_j = \lfloor n/k \rfloor$,  $z_{j+1} = \ldots = z_k = \lceil n/k \rceil$, 
and $\sum_{i=1}^k z_i = n$.
As $c$ is convex we have  $\sum_{i=1}^k c(z_i) \ge k\cdot c(\frac n k)$. 
\end{proof}

The next result shows that there always exists an optimum TSP tour for $G(n,d)$ 
which corresponds to an even length z-vector.

\begin{lemma}\label{lemma:evenk}
Let $d\ge 4$ and $n$ be even. Then there exists an optimum TSP tour for $G(n,d)$ which corresponds to an even length
z-vector. Moreover, the length of an optimum TSP tour for $G(n,d)$ is at least:
$$ \min_{1\le k \le n/2} n+2k+2d - 2 + 2k \cdot c(\frac n {2k})~ .$$
\end{lemma}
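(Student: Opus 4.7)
The plan is to first establish the existence of an optimum tour for $G(n,d)$ whose z-vector has even length, and then derive the ``moreover'' lower bound as a direct consequence of Lemma~\ref{lemma:TourLength}.

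For the first assertion, I would exploit the parity built into the z-structure of Figure~\ref{fig:z-structure}. In that structure, the tour closes through the top row by two vertical edges of length $d$ each, both attaching to the \emph{middle} row (as made explicit in the proof of Lemma~\ref{lemma:TotalTourCost}). Because each z-shaped subpath swaps its endpoint between the middle row and the bottom row, the z-sequence can both start and end at the middle row only when the number of z's is even. Combined with the earlier observation (derived from Lemmas~\ref{lemma:cheapestinsertion} and~\ref{lemma:cheapest_insertion_g1}) that for $d\ge 4$ there is always an optimum tour of $G(n,d)$ admitting a z-structure, it follows that some optimum tour has a z-vector of even length.

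For the ``moreover'' part, let $k=2k'$ denote the even length of the z-vector of the chosen optimum, where $1\le k'\le n/2$. Lemma~\ref{lemma:TourLength}, applied with this value of $k$, yields
$$ \mathrm{opt}(G(n,d)) \;\ge\; n + 2k' + 2d - 2 + 2k'\cdot c\!\left(\tfrac{n}{2k'}\right),$$
and the right-hand side is at least the minimum of this expression over $k'\in\{1,\ldots,n/2\}$, producing the claimed lower bound.

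The main obstacle is rigorously pinning down the parity claim. Should the parity argument from the figure turn out to be insufficient (for example, if the paper implicitly admits z-tours closed through a vertical edge of length $2d$, which would allow odd~$k$), the fallback is a cost-improvement transformation: in any odd-length z-vector pick some $z_j\ge 2$ (possible because $k<n$ forces at least one $z_j\ge 2$) and split it into halves $\lfloor z_j/2\rfloor$ and $\lceil z_j/2\rceil$. A short squaring calculation using~(\ref{eq:sqrt}) shows that the resulting change in tour length, namely $1-d + c(\lfloor z_j/2\rfloor)+c(\lceil z_j/2\rceil)-c(z_j)$, is nonpositive for $d\ge 4$, so the new even-length tour is at most as costly as the original. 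Either route delivers the first assertion, and the ``moreover'' bound then follows by the substitution $k\mapsto 2k'$ described above.
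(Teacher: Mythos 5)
There is a genuine gap, and it sits exactly where the paper's proof does its real work: the odd case. Your parity argument only shows that a tour which literally has the standard closure (z-paths covering both lower lines, joined by unit connectors, closed by the top line and \emph{two} verticals of length $d$ to the middle line) must use an even number of z's; it does not show that some optimum tour has this form. What the preceding lemmas actually guarantee is weaker: for $d\ge 4$ an optimum tour either has the standard structure, or it is the degenerate tour in which all of ${\cal G}$ forms a single segment attached as in Figure~\ref{fig:insert_g1}a), so that one extreme middle point is picked up by a closing side of length $2d$ and the z-sequence has length $1$. That degenerate tour is a legitimate candidate for the optimum, is not excluded by any parity consideration, and is precisely what the paper rules out: it uses Lemma~\ref{lemma:cheapest_insertion_g1} to see that an odd length forces length $1$, computes the length~(\ref{eq:k=1tourlength}) of that tour, and shows via~(\ref{eq:sqrt}) that it is no shorter than the $k=2$ tour of length~(\ref{eq:k=2tourlength}), using $d+\sqrt{(n-2)^2+d^2}\ \ge\ \sqrt{(n-2)^2+4d^2}\ =\ 2\sqrt{\left(n/2-1\right)^2+d^2}$. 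Your primary route assumes away exactly this case.

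Your fallback aims at the right kind of cost comparison, but the claimed change $1-d+c(\lfloor z_j/2\rfloor)+c(\lceil z_j/2\rceil)-c(z_j)$ is unjustified and does not match the geometry. Inside a standard z-structure, splitting a z adds a unit connector and changes the length by $1+c(a)+c(b)-c(z_j)$; there is no $-d$ term, and this quantity is often positive (for $z_j=2$, $d=4$ it equals $2d-1-\sqrt{1+d^2}>0$), consistent with the fact that increasing $k$ is not free. The saving of $d$ only materializes when the split simultaneously repairs the degenerate closure, i.e.\ replaces the length-$2d$ side by a length-$d$ vertical and absorbs the extreme middle point into a z; but then the path being split is not a $c(z_j)$-type z (it covers one fewer middle point than bottom points), so your bookkeeping is off, and merely splitting some $z_j$ in a hypothetical odd tour leaves the $2d$ side untouched, producing a tour that is still not of the form to which Lemma~\ref{lemma:TourLength} applies. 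The needed step is the explicit comparison of~(\ref{eq:k=1tourlength}) with~(\ref{eq:k=2tourlength}), which is the content of the paper's proof. Your derivation of the displayed minimum from Lemma~\ref{lemma:TourLength} by substituting an even length $2k'$ is fine once the existence statement is secured.
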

\begin{proof}
By Lemma~\ref{lemma:TourLength} we only have to rule out that there can exist an optimum TSP tour
for $G(n,d)$ which corresponds to a z-vector of odd length.
By Lemma~\ref{lemma:cheapest_insertion_g1} and its proof we know that if the length of the z-vector
is odd, then it must be~1. 
A TSP tour for $G(n,d)$ corresponding to a z-vector of length one has length
\begin{equation} \label{eq:k=1tourlength}
3d + 3n -4 + \sqrt{(n-2)^2 + d^2}
\end{equation}
A tour corresponding to a z-vector of length~$2$ has length 
\begin{equation} \label{eq:k=2tourlength}
2d + 3n -4 + 2 \sqrt{\left(\frac n2 -1\right)^2 + d^2}
\end{equation}
Now using (\ref{eq:sqrt}) we 
have $$d+\sqrt{(n-2)^2 + d^2} ~\ge~ \sqrt{(n-2)^2 + 4 d^2} ~=~ \sqrt{\left(\frac n2 -1\right)^2 + d^2}~ .$$
This implies (\ref{eq:k=1tourlength})$\ge$(\ref{eq:k=2tourlength}) and therefore the minimum
is always attained for an even length z-vector.
\end{proof}

\begin{theorem}\label{thm:opttourlength}
Let $n\ge 17$ be even. Then an optimum TSP tour for $G(n, \sqrt{n-1})$ has length 
$ 4n  - 4 + 2 \sqrt{n-1}$.
\end{theorem}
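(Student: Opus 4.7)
The plan is to sandwich the optimum tour length between $4n - 4 + 2\sqrt{n-1}$ from both sides, with equality attained at the z-vector $(n/2, n/2)$.

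For the upper bound I would invoke Lemma~\ref{lemma:TotalTourCost} with the concrete z-vector $(n/2, n/2)$, which is legitimate since $n$ is even with $n/2 \ge 2$. The length it yields is $n + 2 + 2\sqrt{n-1} - 2 + 2\,c(n/2)$. The point is that the choice $d = \sqrt{n-1}$ turns the discriminant inside $c(n/2)$ into a perfect square: $(n/2 - 1)^2 + (n-1) = n^2/4$. Hence $c(n/2) = (n-2) + n/2$, and the total collapses to exactly $4n - 4 + 2\sqrt{n-1}$.

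For the matching lower bound I would apply Lemma~\ref{lemma:evenk}, which is available because $n \ge 17$ gives $d = \sqrt{n-1} \ge 4$. It reduces the task to showing that for every integer $k$ with $1 \le k \le n/2$,
\[
n + 2k + 2\sqrt{n-1} - 2 + 2k \cdot c\bigl(n/(2k)\bigr) \;\ge\; 4n - 4 + 2\sqrt{n-1}.
\]
Expanding $c$ and cancelling common terms, this is equivalent to
\[
2k\,\sqrt{(n/(2k) - 1)^2 + (n-1)} \;\ge\; n + 2k - 2.
\]
Both sides are nonnegative, so I would square, and after expanding, the inequality collapses to $4(n-1)(k-1)^2 \ge 0$, which is trivially true with equality exactly when $k = 1$.

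The only step that carries any weight is this squaring-and-simplifying move, but it is a short calculation and I do not anticipate a real obstacle. The clean factor $(k-1)^2$ emerges for essentially the same reason that the upper-bound discriminant is a perfect square, so both halves of the argument rest on a single identity tied to the special choice $d^2 = n - 1$.
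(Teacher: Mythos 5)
Your proposal is correct and follows essentially the same route as the paper: the upper bound comes from the tour with z-vector $(n/2,\,n/2)$ (the paper's $f(1,\sqrt{n-1})$), the lower bound from Lemma~\ref{lemma:evenk}, and the choice $d^2=n-1$ makes the square root collapse exactly as you say. The only difference is the final minimization over $k$: the paper shows $k\mapsto f(k,\sqrt{n-1})$ is nondecreasing via its derivative, whereas you compare each term directly with $k=1$ by squaring, and your identity $\bigl(2k\sqrt{(n/(2k)-1)^2+n-1}\bigr)^2-(n+2k-2)^2=4(n-1)(k-1)^2\ge 0$ checks out, giving a slightly more elementary finish.
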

\begin{proof}
Let 
\begin{eqnarray*}
f(k,d) &:=& n+2k+2d-2+2k\cdot c\left(\frac{n}{2k}\right)\\
       & = & n+2k+2d-2+2k\cdot \left( 2 \cdot(\frac{n}{2k} - 1) + \sqrt{\left(\frac{n}{2k} - 1\right)^2 +d^2} \right)\\
       & = & 3n-2k+2d-2 + \sqrt{\left(n-2k\right)^2 +4 d^2 k^2} 
\end{eqnarray*}

For $d=\sqrt{n-1}$ we get   
\begin{eqnarray*}
f(k,\sqrt{n-1}) & = & 3n-2k+2\sqrt{n-1}-2 + \sqrt{\left(n-2k\right)^2 +4 (n-1) k^2} \\
                & = & 3n-2k+2\sqrt{n-1}-2 + \sqrt{n^2 +4 n k (k-1)} \\
\end{eqnarray*}

From Lemma~\ref{lemma:evenk} and its proof we know that there exists a TSP tour in 
$G(n, \sqrt{n-1})$ of length $f(1,\sqrt{n-1})$. We now claim that
\begin{equation}\label{eq:tourlength} 
f(k,\sqrt{n-1}) > f(1,\sqrt{n-1}) ~~~ \text{ for } k=2, 3, \ldots, n/2.
\end{equation} 
By Lemma~\ref{lemma:evenk} this finishes the proof.
To prove  (\ref{eq:tourlength}) let $g(k) := f(k,\sqrt{n-1})$.
We will show that $g'(k) > 0$ for $k>1$. 
We have 
\begin{eqnarray*}
g'(k) = \frac{4kn -2n - 2 \sqrt{n^2 +4 n k (k-1)}}{\sqrt{n^2 +4 n k (k-1)}}
\end{eqnarray*}

Now we have
\begin{eqnarray*}
                  (k-1) n         &\ge& k-1    \\
\Rightarrow~~~~~  k^2 n^2 - k n^2 &\ge& nk^2 - n k \\
\Rightarrow~~~~~  4k^2 n^2 + n^2 - 4 k n^2 &\ge& n^2 + 4nk^2 - 4n k \\
\Rightarrow~~~~~ 2kn-n &\ge& \sqrt{n^2 +4 n k (k-1)} \\
\end{eqnarray*}
This proves that $g'(k) \ge 0 $ for $k\ge1$. 
\end{proof}

\begin{lemma}\label{lemma:subtourbound}
The optimum solution of the subtour LP for the instance $G(n,\sqrt{n-1})$ has value
$3n-3 + 3\sqrt{n-1} + \sqrt{n}$.
\end{lemma}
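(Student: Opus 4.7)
The proof establishes the LP value in two directions: an easy upper bound via a feasible primal solution, and a matching lower bound via LP duality.

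\textbf{Upper bound.} I would first exhibit a feasible subtour LP solution of value exactly $3n - 3 + 3\sqrt{n-1} + \sqrt{n}$, essentially the solution depicted in Figure~\ref{fig:newsubtourbound} specialized to $d = \sqrt{n-1}$. Verification consists of checking that every vertex has $x$-degree exactly $2$; checking that every subtour constraint $\sum_{e \subseteq S} x_e \leq |S|-1$ is satisfied, for which by symmetry only a handful of subsets need inspection (the three row-sets and the small star-neighborhoods around the two corners of the lower rectangle, where the $1/2$-edges live); and computing the objective as a weighted sum of the horizontal, vertical, and diagonal edge lengths.

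\textbf{Lower bound.} This is the technical heart of the proof. My plan is to construct an explicit feasible dual solution whose objective matches the upper bound. The dual of the subtour LP has free variables $y_v$ for each degree constraint, nonnegative $z_S$ for each subtour constraint, and nonnegative $w_e$ for each $x_e \leq 1$ bound; its constraints read $y_u + y_v - \sum_{S \supseteq \{u,v\}} z_S - w_{uv} \leq c_{uv}$ for every edge $uv$, and its objective is $2 \sum_v y_v - \sum_S z_S (|S|-1) - \sum_e w_e$. Guided by complementary slackness with the primal solution just described, I would support $z_S$ only on the two subsets whose subtour constraint is tight in the primal (the top row and the bottom row), support $w_e$ only on edges used with $x_e = 1$, and choose the $y_v$ constant on each of the three rows up to small corner corrections near the two lower corners where the fractional edges sit.

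\textbf{Main obstacle.} The hard part is the dual construction itself. The $O(n)$ dual constraints corresponding to primal edges with $x_e > 0$ must be tight by complementary slackness; these form a rigid linear system in the $y_v, z_S, w_e$ that will pin most variables down once the corner corrections are chosen carefully. The remaining $\Theta(n^2)$ dual constraints, corresponding to edges not used in the primal, must each hold; here the Euclidean distance $c_{uv}$ is typically so much larger than the sum $y_u + y_v$ that slackness is automatic, but a handful of near-tight cases, such as long diagonals terminating at a corner vertex, will likely require the inequality $a+\sqrt{b^2+c} \geq \sqrt{(a+b)^2+c}$ of (\ref{eq:sqrt}) already used elsewhere in the paper. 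Once the dual is shown to be feasible with matching objective, weak duality delivers the lower bound, and together with the upper bound this proves the lemma.
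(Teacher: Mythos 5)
Your upper-bound half coincides with everything the paper's own proof actually contains: that proof consists of exhibiting the solution of Figure~\ref{fig:newsubtourbound} and asserting that its value is "easily verified"; no dual certificate or any other optimality argument is given there. The genuine gap in your proposal is therefore in the lower-bound half, which is where the whole content of the lemma lies and which you only outline. You never write down the dual solution: the values of the $y_v$, the "corner corrections", the $z_S$ on the two row sets and the $w_e$ are left undetermined, and you yourself flag as unresolved exactly the two steps that constitute the proof, namely solving the tight-constraint system at the corners and checking the $\Theta(n^2)$ slack constraints (with near-tight diagonals into the corner vertices). Until those numbers are produced and verified, weak duality yields nothing, so optimality is not established. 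Note also that complementary slackness does not force your ansatz: besides the two full rows, \emph{every} set of consecutive vertices inside a row is tight for the primal solution, so restricting the support of $z$ to the two row sets is a guess whose feasibility is precisely what would have to be proved.

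Two smaller but concrete issues. First, you claim the solution of Figure~\ref{fig:newsubtourbound} has value exactly $3n-3+3\sqrt{n-1}+\sqrt{n}$, but computing it (horizontal paths of lengths $n-1$, $n-3$, $n-1$, two verticals of length $d$, and two half-weight triangles costing $\tfrac12\left(d+1+\sqrt{d^2+1}\right)$ each) gives $3n-4+3d+\sqrt{d^2+1}$, i.e.\ $3n-4+3\sqrt{n-1}+\sqrt{n}$ for $d=\sqrt{n-1}$; this is the value the paper itself uses in the proofs of Theorem~\ref{thm:main} and Theorem~\ref{thm:explicitformula}, so the constant $-3$ in the lemma statement is evidently a typo that an actual verification of the "easy" half would have caught, and any matching dual you construct must match $3n-4+\ldots$, not $3n-3+\ldots$. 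Second, primal feasibility is not a matter of inspecting "a handful of subsets by symmetry": the solution is not a convex combination of two tours (the three half-edges at each end form a triangle, which no tour can intersect in a perfect matching), so one has to argue directly, e.g.\ that every cut of the support graph has fractional weight at least $2$, which then settles all subtour constraints at once.
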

\begin{proof}
An optimum solution to the subtour LP is indicated in Figure~\ref{fig:newsubtourbound}.
It is easily verified that the value is as claimed.
\end{proof}

Now we can state and prove an explicit formula for the integrality ratio of the instances
$G(n,\sqrt{n-1})$.

\begin{theorem}\label{thm:explicitformula}
Let $n\ge 17$ be even.
The integrality ratio of the subtour LP for the instance $G(n,\sqrt{n-1})$
is 
$$ \frac{4n-4+2\sqrt{n-1}} {3n-4 + 3\sqrt{n-1} + \sqrt{n}}~.$$  
\end{theorem}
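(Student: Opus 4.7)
The plan is to combine the two preceding quantitative results directly. By definition, the integrality ratio for a single instance is the length of an optimum TSP tour divided by the optimum value of the subtour LP. Both quantities have been pinned down exactly in the previous results of this section, so the proof is essentially a one-step assembly.

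First, I would invoke Theorem~\ref{thm:opttourlength}, whose hypothesis (``$n\ge 17$ even'') matches the present statement exactly. Its internal use of $d\ge 4$ is satisfied here, since $\sqrt{n-1}\ge 4$ whenever $n\ge 17$, so every earlier lemma in Section~\ref{sec:ExactFormula} that is invoked through Theorem~\ref{thm:opttourlength} applies. This yields the numerator $4n-4+2\sqrt{n-1}$. Next I would invoke Lemma~\ref{lemma:subtourbound} to supply the denominator; the feasibility and the optimality of the LP solution depicted in Figure~\ref{fig:newsubtourbound} have both already been asserted there, so nothing needs to be rechecked. Forming the quotient and writing it as the single fraction displayed in the statement completes the proof.

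The main obstacle has already been overcome in the preceding development: the z-structure analysis (Lemmas~\ref{lemma:cheapestinsertion}--\ref{lemma:evenk}) is what reduces the combinatorial search for an optimum tour to a one-parameter calculus problem in the number of z-shaped subpaths, and Theorem~\ref{thm:opttourlength} is precisely the resolution of that calculus problem at $d=\sqrt{n-1}$. By the time we arrive at the present theorem, no inequality manipulations, convexity arguments, or limiting statements remain; only the algebraic substitution of the two previously determined values into the definition of the integrality ratio.
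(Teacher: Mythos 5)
Your proposal is correct and coincides with the paper's own proof, which likewise obtains the ratio by dividing the tour length from Theorem~\ref{thm:opttourlength} by the LP value from Lemma~\ref{lemma:subtourbound}. One small point you (like the paper) pass over silently: Lemma~\ref{lemma:subtourbound} states the LP value as $3n-3+3\sqrt{n-1}+\sqrt{n}$, whereas the denominator in the theorem (and the value one actually gets from Figure~\ref{fig:newsubtourbound}, namely $3n-4+3\sqrt{n-1}+\sqrt{n}$) has constant $-4$, so the lemma's constant is a typo that a careful write-up should flag when forming the quotient.
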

\begin{proof}
This result immediately follows from Theorem~\ref{thm:opttourlength}
and Lemma~\ref{lemma:subtourbound}.
\end{proof}

Using Theorem~\ref{thm:explicitformula} one obtains for example 
that the integrality ratio of $G(18,\sqrt{17})$ is $1.14$.  
This is much larger than the lower bound
$1.01$ which can be obtained from~(\ref{eq:lowerbound}).

The approach used to prove Theorem~\ref{thm:explicitformula}
can also be applied to other distance functions $d(n)$. 
Choosing $d(n) = \sqrt{n/2-1}$ one obtains an integrality ratio of
$$ \frac{4n-6+2\sqrt{n/2-1}} {3n-4 + 3\sqrt{n/2-1} + \sqrt{n/2}}~.$$
This value is larger than the value stated in Theorem~\ref{thm:explicitformula}.
However, the proof gets a bit more complicated as 
now one has to prove that the function $f(k,\sqrt{n/2-1})$
appearing in the proof of Theorem~\ref{thm:opttourlength}
attains its minimum for $k=2$. 

\section*{Acknowledgement}
We thank Jannik Silvanus for useful discussions and an anonymous referee for pointing out an 
improved formulation of Lemma~\ref{lemma:approxlength}. 
 
\bibliographystyle{plain}

\bibliography{tsp}

\end{document}